\titleformat*{\section}{\bf\centering} % INDICATES THE FONT SIZE OF THE SECTIONS. 
\titleformat*{\subsection}{\centering\bf} % INDICATES THE FONT SIZE OF THE SUBSECTIONS. 
\titleformat*{\subsubsection}{\centering\it} % INDICATES THE FONT OF SUBSECTIONS.
\newcommand{\be}{\begin{equation}}
\newcommand{\ee}{\end{equation}}
\newcommand{\vs}{\vspace{0.2cm}}
\numberwithin{equation}{section}
\newtheorem{Theorem}{Theorem}
\newtheorem{Definition}[Theorem]{Definition}
\newtheorem{Proposition}[Theorem]{Proposition}
\begin{document}

\begin{center}
{\Large\bf On the existence of Killing fields in smooth 
\vs

spacetimes with a compact Cauchy horizon}

\vs\vs\vs

{\sc Mart\'in Reiris Ithurralde}

{mreiris@cmat.edu.uy}
\vs

{\sc Ignacio Bustamante Bianchi}

{ibustamante@cmat.edu.uy}

\vspace{.4cm}

{\it Centro de Matem\'atica, Universidad de la Rep\'ublica,} 

{\it Montevideo, Uruguay.}
\vs\vs

\end{center}

\begin{abstract}
We prove that the surface gravity of a compact non-degenerate Cauchy horizon in a smooth vacuum spacetime, can be normalized to a non-zero constant. This result, combined with a recent result by Oliver Petersen and Istv\'an R\'acz, end up proving the Isenberg-Moncrief conjecture on the existence of Killing fields, in the smooth differentiability class. The well known corollary of this, in accordance with the strong cosmic censorship conjecture, is that the presence of compact Cauchy horizons is a non-generic phenomenon. Though we work in $3+1$, the result is valid line by line in any $n+1$-dimensions.
\end{abstract}

\section{Introduction}
%
%Let us begin explaining the setup. 
This article discusses the existence of Killing fields on {\it smooth} time-orientable vacuum $3+1$ - spacetimes $(\mathcal{M};g)$ having a compact connected Cauchy horizon $\mathcal{C}$. The horizon $\mathcal{C}$ is assumed to divide $\mathcal{M}$ into two connected regions, one of which is a globally hyperbolic spacetime having a smooth closed three-manifold as a Cauchy surface (for the notion of Cauchy horizon see \cite{MR757180}). Such $\mathcal{C}$ is known to be always a smooth \cite{MR3383325} totally geodesic null hypersurface of $\mathcal{M}$, ruled by null geodesics \cite{MR757180}. We will assume that $\mathcal{C}$ is {\it non-degenerate}, namely, that there is on it at least one future or past incomplete null  geodesic, (recall that an inextensible geodesic is {\it incomplete} if it has finite affine length). The ``future'' direction is relabeled if necessary so that at least an incomplete null geodesic points into it. 
\vs

The occurrence of compact Cauchy horizons is a rather peculiar and unique phenomenon of the General theory of Relativity whose conceptual and theoretical significance can be hardly overlooked. The spacetimes having Cauchy horizons contain regions that are not predictable from the initial data over the Cauchy surface and therefore display properties that conflict our intuition and the causal foundation of classical physics. Yet, according to the strong cosmic censorship conjecture, such peculiar occurrences should be in fact non-generic, (see for instance Geroch-Horowitz in \cite{MR544343}). Pointing into that direction, in 1983 James Isenberg and Vincent Moncrief started a series of seminal investigations to demonstrate that spacetimes with compact Cauchy horizons are indeed non-generic by proving first that they must always contain a particular type of Killing field \cite{MR709474}. Despite of the significant progress made in a wide class of situations \cite{MR709474}, \cite{MR2438980}, \cite{MR4066588}, the general existence's proof of a Killing field on smooth spacetimes remained elusive. In this article we prove this general conjecture using recent breakthroughs by Petersen and R\'acz in \cite{petersen2018symmetries} and by Petersen in \cite{petersen2019extension}. We discuss how this is done in the following lines.

The usual strategy to prove the existence of a Killing field requires sorting two difficulties. First, proving that there is a null nowhere-zero vector field $K$ over $\mathcal{C}$ such that,
\be\label{CONSTANTTEMP}
\nabla_{K}K=-K,
\ee
and second, proving that $K$ extends to a Killing field inside the spacetime $\mathcal{M}$. For technical reasons, the common approach to achieve these two steps required assuming that the spacetime was analytic. This was the prevalent hypothesis in \cite{MR709474} and in the sequel \cite{MR2438980} and \cite{MR4066588}, but it was also present in related black-hole contexts, in the seminal work of Hawking in \cite{MR293962}, in the work of Hollands, Ishibashi and Wald in \cite{MR2291793} or in that of Isenberg and Moncrief in \cite{MR2438980}. The advantage of this is that, by using the Einstein vacuum equations it is possible to expand in Taylor series an analytic $K$ satisfying (\ref{CONSTANTTEMP}) to obtain an actual Killing field inside the spacetime, (in this case the extension is only to the globally hyperbolic region). Thus, the second aforementioned step becomes feasible provided one can show the existence of that analytic $K$. Leaving aside the issue of finding it, the well known problem of working in the analytic class, clearly pointed out in \cite{MR709474}, is that analytic spacetimes are already non-generic inside the smooth class. Hence, if the aim is to prove that compact Cauchy horizons are non-generic, then one must necessarily remove this technical assumption. A significant progress in that direction was recently achieved by Petersen and R\'acz in \cite{petersen2018symmetries} and by Petersen in \cite{petersen2019extension}, that established the existence of a Killing field on smooth spacetimes provided there is a smooth $K$ on $\mathcal{C}$ satisfying (\ref{CONSTANTTEMP}), (as a result the extension is to both sides of $\mathcal{C}$). In turn, these works were based upon the breakthroughs by Petersen in \cite{petersen2018wave} on the wave equations with initial data on a Cauchy horizon. This article is devoted to prove the existence of such smooth $K$. The Isenberg-Moncrief conjecture in the smooth class then follows as a corollary.

We proceed to explain how to carry over the task. We claim first that if (\ref{CONSTANTTEMP}) holds then, at every $p$, $K(p)$ is necessarily the only null vector for which the affine length of the inextensible null geodesic starting from $p$ with velocity $K(p)$ is equal to one (see \cite{MR4066588}). This is seen as follows. Let $\gamma(s)$ be the geodesic with $\gamma(0)=p$ and $\gamma'(0)=K(p)$. Then $\gamma'(s)=f(s)K(\gamma(s))$ for some $f(s)$. Thus (\ref{CONSTANTTEMP}) implies $f'-f^{2}=0$ and $\gamma'(0)=K(p)$ implies $f(0)=1$. Hence $f(s)=1/(1-s)$, proving, as wished, that the affine length of $\gamma$ is one. Suppose now that all future pointing null geodesics on $\mathcal{C}$ are incomplete. Then one could define a {\it candidate vector field} $\tilde{K}$ (candidate to be the required smooth $K$) by the same intrinsic property just mentioned and that would satisfy $K$ if it were to exist: so let $\tilde{K}(p)$ be the only null vector at $p$ such that the affine length of the inextensible null geodesic starting from $p$ with velocity $\tilde{K}(p)$ is equal to one. This candidate vector field $\tilde{K}$ may fail to be smooth but (\ref{CONSTANTTEMP}) will still hold. Therefore, if $\tilde{K}$ is shown to be smooth then it will be the required vector field $K$. 

Based on the discussion above, the Isenberg-Moncrief conjecture in the smooth class will follow as a corollary of the next theorem.
\begin{Theorem}[Main Theorem]\label{MAINTHEO} Let $\mathcal{C}$ be a compact non-degenerate Cauchy horizon on a time orientable smooth spacetime. Then, all the future null geodesics of $\mathcal{C}$ have finite affine length. Furthermore, the candidate vector field is smooth.
\end{Theorem}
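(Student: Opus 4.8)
The plan is to reduce both assertions of Theorem~\ref{MAINTHEO} to a single regularity statement about the ``affine length function'' on $\mathcal{C}$, and then to establish that statement by combining the vacuum equations restricted to $\mathcal{C}$ with the dynamics of the generator flow. First I would fix, after passing if necessary to a double cover on which the line field of generators is orientable, a smooth nowhere-zero future-directed vector field $\xi$ tangent to the generators, and let $\phi_t$ be its flow; since $\mathcal{C}$ is compact and the generators remain in $\mathcal{C}$, $\phi_t$ is complete. Let $\kappa$ be the surface gravity, $\nabla_\xi\xi=\kappa\,\xi$, and set $F(t,p)=\int_0^t\kappa(\phi_u(p))\,du$. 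The elementary computation used after \eqref{CONSTANTTEMP} to pass from a vector field to its affine parameter shows that the inextensible future null geodesic issuing from $p$ with velocity $\xi(p)$ has affine length
\[
L(p)=\int_0^{\infty}e^{F(t,p)}\,dt\ \in\ (0,\infty],
\]
that the candidate vector field equals $\tilde K=L\,\xi$ wherever $L$ is finite, and that, where finite, $L$ solves along the generators the linear transport equation $\xi(L)+\kappa\,L=-1$. Hence the theorem is equivalent to: $L<\infty$ on all of $\mathcal{C}$, and $L\in C^{\infty}(\mathcal{C})$; equivalently, $\kappa$ can be rescaled by a smooth positive factor to the constant $-1$.

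For the finiteness I would argue by normalizing the sign of $\kappa$. Because $\mathcal{C}$ is totally geodesic, the degenerate metric on $\mathcal{C}$ is Lie-dragged by $\xi$, and one obtains from this a smooth $\phi_t$-invariant volume form, hence $\phi_t$-invariant probability measures; more to the point, the whole weak-$*$ compact set of invariant measures is available. The key step is a zeroth-law-type statement: the null components of the vacuum equations on $\mathcal{C}$ yield a propagation equation for $\kappa$ along the generators in terms of a naturally associated one-form $\tau$ of the congruence (with $|\tau|^2$ and a divergence term entering), and integrating it against an arbitrary invariant measure — whose invariance annihilates the exact part — forces $\kappa$, on the support of every invariant measure, to be cohomologous along the flow to one and the same constant $c$ (its generator-averages all agree). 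Now non-degeneracy enters: $c\neq 0$, since if $\kappa$ were an $\xi$-coboundary then $F(t,\cdot)$ would be bounded and every generator affinely complete; and $c<0$, since $c>0$ would likewise give $e^{F(t,p)}\to\infty$ and hence completeness of every generator, contradicting the incomplete one. Propagating the sign from the recurrent set to all of $\mathcal{C}$ (every generator accumulates to the future on the recurrent set) one rescales $\xi$ by a suitable smooth positive factor so that $\kappa<0$ everywhere, and a compactness argument on the invariant measures even gives $\kappa\le-\varepsilon_0<0$ uniformly; then $F(t,p)\to-\infty$ uniformly and linearly, so $L$ is finite and bounded on $\mathcal{C}$.

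For the smoothness, with $\kappa<0$ uniformly, $L(p)=\int_0^\infty e^{F(t,p)}\,dt$ is a uniformly convergent integral with an $e^{-\varepsilon_0 t}$ tail, so I would differentiate under the integral sign; the $p$-derivatives of $e^{F(t,p)}$ up to order $k$ involve the derivatives up to order $k$ of the flow $\phi_t$, and the bootstrap closes — $\partial_p^k L$ being again a convergent integral — as soon as $\|d^k\phi_t\|$ grows subexponentially in $t$. This is exactly where the rigidity of compact Cauchy horizons is invoked: the generator flow exhibits no exponential separation of generators (vanishing entropy, at most polynomial growth of its differential), which gives $L\in C^\infty(\mathcal{C})$. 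Therefore $\tilde K=L\,\xi$ is a smooth vector field and, by the transport equation, $\nabla_{\tilde K}\tilde K=-\tilde K$; combined with \cite{petersen2018symmetries,petersen2019extension} this yields the Isenberg--Moncrief conjecture in the smooth class.

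The main obstacle is the passage, in the finiteness step, from the \emph{single} incomplete generator to control of \emph{all} generators: the non-degeneracy hypothesis constrains only one invariant measure, so the vacuum structure equations on $\mathcal{C}$ must be made to propagate the ``zeroth law'' — constancy of the generator-averaged surface gravity, with the correct negative sign — first to the whole recurrent set and then, through the dynamics of the flow, onto the wandering generators. The smoothness bootstrap is comparatively routine but genuinely depends on the dynamical tameness (vanishing entropy) of the generator flow on a compact Cauchy horizon.
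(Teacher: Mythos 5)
Your reduction of the theorem to ``$L(p)=\int_0^\infty e^{F(t,p)}dt$ is finite and smooth'' is the same starting point as the paper, but the two steps on which your argument actually rests are asserted rather than proved, and they are precisely the content of the theorem. First, the ``zeroth-law-type'' step: you claim that integrating a propagation equation for $\kappa$ against every invariant measure of the generator flow forces all generator-averages of $\kappa$ to equal one common constant $c$, that non-degeneracy gives $c<0$, and that the sign can then be propagated from the recurrent set and realized by a smooth rescaling with a uniform bound $\kappa\le-\varepsilon_0$. None of this is justified: invariance of a measure kills $\xi$-exact terms, but it does not yield cohomology of $\kappa$ to a single constant across \emph{all} invariant measures (that is a Liv\v{s}ic-type rigidity statement, not a consequence of averaging), the passage from ``negative average on supports of invariant measures'' to ``every generator is incomplete'' needs a genuine argument, and solving $f\kappa+\xi(\ln f)<0$ globally for a smooth $f>0$ is essentially equivalent to the uniformity you are trying to establish. (Also, the degenerate metric being Lie-dragged does not by itself produce an invariant volume form on $\mathcal{C}$; only Krylov--Bogolyubov gives you invariant measures for free.) Second, your smoothness step hinges on ``$\|d^k\phi_t\|$ grows subexponentially (vanishing entropy, polynomial growth of the differential)'', which is invoked with no proof; a priori nothing rules out exponential mixing of horizontal directions into the null direction, and without such a bound the differentiation under the integral sign does not close.

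The paper closes exactly these gaps by a different mechanism that avoids ergodic theory altogether: it fixes a horizontal distribution $H$, builds coordinates $\psi(x,y,z)=\overline{\exp}(\varphi(p_0,z),xe_1^0(z)+ye_2^0(z))$ by horizontally parallel-transporting a frame along one orbit (Proposition \ref{TOLO} guaranteeing the $z$-curves are reparametrized null geodesics), and then uses the null-closedness $d\omega_X(X,\cdot)=0$ coming from the vacuum equations in a Stokes/ribbon identity to rewrite the affine length as (\ref{AFFINEL}), where all the $(x,y)$-dependence sits in the transverse integrals $S(x,y,\cdot)$ and the only possibly divergent factor is the fixed integrable function $e^{\int_z^\rho\omega(0,0,\lambda)d\lambda}$ attached to the single incomplete generator. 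Compactness of the orthonormal frame bundle of $H$ then gives bounds on $S$ and \emph{all} its derivatives that are uniform in $z\in(-\delta,\infty)$, because each strip of $\psi$ is a copy of a fixed map on the compact bundle $E$. This simultaneously yields (A') -- $L(x,y,z)\le e^{2\|S\|_{L^\infty}}L(0,0,0)$, so incompleteness is open and closed and propagates by connectedness -- and (B') -- differentiation under the integral is legitimate because every derivative of the integrand is dominated by a constant times the same integrable factor. If you want to rescue your outline, you would have to supply proofs of your two dynamical claims, which amounts to redoing this uniform-control work in another language; as written, the proposal has a genuine gap at both of its load-bearing points.
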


Note that if $Z$ is any smooth future null and nowhere zero vector field defined on an open set $U$, and $\psi^{-1}:U\rightarrow \mathbb{R}^{3}$ is a smooth local chart, then the candidate vector field adopts the local presentation,
\be
\tilde{K}(\psi(x,y,z))=L(x,y,z)Z(\psi(x,y,z)),
\ee
where $L(x,y,z)$ is the affine length of the null geodesic starting from $\psi(x,y,z)$ with velocity $Z(\psi(x,y,z))$. If $L(x,y,z)$ is shown to be smooth, then $\tilde{K}$ will be smooth on $U$. This local expression will be used later.
\vs

Before passing to the next section and aimed to orient the reader, let us say a few words about the proof of Theorem \ref{MAINTHEO} and the structure of the article. 

The proof of Theorem \ref{MAINTHEO} relies upon a particular implementation of the Isenberg-Moncrief ``ribbon argument", (for a discussion see \cite{MR4066588} and references therein). The ribbon argument has been used often in this and related problems, so it seems useful to discuss briefly here what is new about it in this article. 

In general, by using a ribbon argument one can link the affine length of two null future geodesics, and this permits suitable explicit local presentations of the candidate vector field. In our case such presentation is contained in equations (\ref{CANDIDATEEXP}) and (\ref{AFFINEL}). In simple terms, a ribbon argument relies always on exploiting the following crucial fact: for any null nowhere zero vector field $Z$ tangent to $\mathcal{C}$ and defined on an open set $U$, the one-form $\omega_{Z}$ on $U$ given by,
\be
\nabla_{Y}Z=:\omega_{Z}(Y)Z,
\ee
is {\it null-closed}, that is,
\be
d\omega_{Z}(Z,Y)=0,
\ee
for any $Y$. This fact is due to the vacuum Einstein equations and a proof of it in Gaussian null coordinates can be found in \cite{MR4066588}, (intrinsic proofs can be given too, \footnote{We are indebted to Oliver Petersen for showing us an unpublished intrinsic calculation.}). Hence, Stokes's theorem implies that the integral of $\omega_{Z}$ over the boundary of any compact two-manifold $\mathcal{R}$ that is embedded in $U$ and with $Z$ tangent to it, is zero. 
Now, the ribbon arguments use as $\mathcal{R}$ rectangles with two of its sides being null curves and the other two sides being curves transversal to the null directions. However, neither $Z$ nor the rectangles $\mathcal{R}$ admit canonical choices, and different selections may result in different implementations, hence in different presentations of the candidate vector field. This is a central point because certain presentations are better than others when it comes to prove smoothness. In this article the rectangles $\mathcal{R}$ are constructed using the notions of horizontal geodesic and horizontal parallel transport that depend upon fixing any smooth distribution of two-planes transversal to the null directions and that we introduce in Section \ref{S1}. Briefly, horizontal geodesics are curves whose velocity field is horizontally parallel and is tangent to the two-planes of the distribution. Now, fixed a null future and nowhere zero smooth vector field $X$ on $\mathcal{C}$, we construct the rectangles $\mathcal{R}$ by transporting an initial horizontal geodesic segment $\gamma:[0,a]\rightarrow \mathcal{C}$ along the null curve following $X$ and starting at $\gamma(0)$. The vector field $Z$ to be used is essentially the velocity field obtained while transporting the initial horizontal geodesic, and that turns out to be null and non-zero thanks to Proposition \ref{TOLO}, (as a matter of fact this is the only relevant proposition that we will use). It could happen that after such transporting process the rectangles obtained are just immersed and not embedded. Nonetheless it will be clear that this is of no importance and that the ribbon argument applies as well. As it will turn out the local presentation of the candidate vector field thus obtained and that is contained in equations (\ref{CANDIDATEEXP}) and (\ref{AFFINEL}), will be easily proved to be smooth.

The notions of ``horizontal geometry'', including the notion of horizontal geodesic and horizontal exponential map are discussed in Section \ref{S1}. The main theorem is treated in the Section \ref{S2} and in Section \ref{HGEO} we prove the couple of propositions stated in Section \ref{S1}, including Proposition \ref{TOLO}.
\vs

\noindent {\bf Aknowledgments}. We would like to thank Oliver Petersen and Piotr Chrusciel for their tight comments on the original manuscript.

\section{The horizontal exponential map}\label{S1}
Let $\mathfrak{p}:T\mathcal{C}\rightarrow \mathcal{C}$ be the tangent bundle of $\mathcal{C}$, (points in $T\mathcal{C}$ are denoted as usual by $(p,v)$, with $\mathfrak{p}(p,v)=p$). Let $\mathfrak{p}:N\rightarrow \mathcal{C}$ be the vector-bundle of null vectors tangent to $\mathcal{C}$. We call $N$ the {\it null bundle}. Let $\mathfrak{p}:H\rightarrow \mathcal{C}$ be any smooth distribution of two-planes in $T\mathcal{C}$, that we think as a vector bundle with two-dimensional fibers, such that $T\mathcal{C}=N\oplus H$. Having chosen $H$, we call it the {\it horizontal bundle}. The fibers of $N$ and $H$ over $p$ will be denoted by $N(p)$ and $H(p)$ respectively. Let $\pi:T\mathcal{C}=N\oplus H\rightarrow H$ be the natural projection (i.e. if $u=v\oplus w$ with $u\in T_{p}\mathcal{C}$, $v\in N(p)$ and $w\in H(p)$ then $\pi(p,u)=(p,w)$). 

A smooth vector field $Y$ on $\mathcal{C}$ is said to be {\it horizontal} if $Y(p)\in H(p)$ for all $p\in \mathcal{C}$ (i.e. $Y$ is a smooth section of $H$). Let $\nabla$ be the space-time covariant derivative restricted to $\mathcal{C}$ (recall that $\mathcal{C}$ is totally geodesic). Define the {\it horizontal covariant derivative} $D$ on $H$ as follows: if $X$ is a vector field on $\mathcal{C}$ and $Y$ is a horizontal vector field, then,
\be
D_{X}Y:=\pi(\nabla_{X}Y).
\ee
This horizontal covariant derivative defines {\it horizontal parallel fields} over curves in the usual manner: a horizontal vector field $V:(a,b)\rightarrow H$ over a curve $\gamma:(a,b)\rightarrow \mathcal{C}$ (that is $V(s)\in H(\gamma(s))$ for all $s\in (a,b)$), is {\it parallel} iff $D_{\gamma'}V=0$. Given $\gamma: [a,b]\rightarrow \mathcal{C}$ and a vector $v\in H(\gamma(a))$ one can always {\it parallel transport} $v$ along $\gamma$ obtaining thus a horizontal parallel field $V$ over $\gamma$ with $V(a)=v$. Observe that $D$ is compatible with the spacetime metric $g$ restricted to $H$, namely if $Y$ and $Z$ are horizontal vector fields, and $X$ is a vector field on $\mathcal{C}$ then, 
\begin{align}
X(g(Y,Z)) & =g(\nabla_{X}Y,Z)+g(Y,\nabla_{X}Z)=g(\pi(\nabla_{X}Y,Z)+g(Y,\pi(\nabla_{X}Z))\\ 
& =g(D_{X}Y,Z)+g(Y,D_{X}Z).
\end{align}
We assume from now on that $H$ is endowed with the metric $g$.

A curve $\gamma:(a,b)\rightarrow \mathcal{C}$ is said to be a {\it horizontal geodesic} if $\gamma'(s)\in H(\gamma(s))$ for all $s\in (a,b)$ and $D_{\gamma'}\gamma'=0$. The following basic proposition on the existence of horizontal geodesics will be proved in Section \ref{HGEO}.
\begin{Proposition}[Existence and uniqueness]\label{EXISTENCE2}
Given $p\in \mathcal{C}$ and $v\in H(p)$, there is a unique horizontal geodesic $\gamma:(-\infty,\infty)\rightarrow \mathcal{C}$ with $\gamma(0)=p$ and $\gamma'(0)=v$. 
\end{Proposition}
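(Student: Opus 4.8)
The plan is to reduce the existence and uniqueness of horizontal geodesics to the standard ODE theory by writing $D_{\gamma'}\gamma'=0$ as a second-order system in local coordinates, and then to upgrade local existence to global existence $(-\infty,\infty)$ by exploiting the structure of the equation together with the compactness of $\mathcal{C}$. First I would fix a point $p$ and a coordinate chart $U$ around it, together with a local frame $\{e_{1},e_{2}\}$ of the horizontal bundle $H$ over $U$; writing a horizontal curve as $\gamma(s)$ with velocity $\gamma'(s)=\xi^{a}(s)e_{a}(\gamma(s))$, the condition $D_{\gamma'}\gamma'=\pi(\nabla_{\gamma'}\gamma')=0$ becomes, after expanding $\nabla_{\gamma'}\gamma'$ and projecting with $\pi$, a first-order ODE for $(\gamma,\xi)$ of the schematic form $\dot{\gamma}=\xi^{a}e_{a}(\gamma)$, $\dot{\xi}^{a}=-\widetilde{\Gamma}^{a}_{bc}(\gamma)\,\xi^{b}\xi^{c}$, where the coefficients $\widetilde{\Gamma}^{a}_{bc}$ are smooth functions built from the Christoffel symbols of $\nabla$, the frame $\{e_{a}\}$, and the projection $\pi$. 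Smoothness of all the data (the distribution $H$, hence $\pi$, and the ambient connection restricted to the totally geodesic $\mathcal{C}$) gives smooth coefficients, so the Picard--Lindelöf theorem yields a unique maximal solution through $(p,v)$ on some open interval, establishing local existence and uniqueness.

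For the global statement I would argue as follows. The right-hand side of the $\xi$-equation is quadratic in $\xi$, so a priori solutions could blow up in finite time; I would rule this out using the metric $g$ on $H$. The key observation is that $D$ is $g$-compatible on $H$ (this is exactly the displayed computation preceding the proposition, $X(g(Y,Z))=g(D_{X}Y,Z)+g(Y,D_{X}Z)$), hence along a horizontal geodesic $\frac{d}{ds}g(\gamma',\gamma')=2g(D_{\gamma'}\gamma',\gamma')=0$, so $g(\gamma',\gamma')$ is constant. One must be slightly careful here because $g$ restricted to $H$ need not be definite; however, since $H$ is transversal to the null line field $N$ and $T\mathcal{C}=N\oplus H$ with $\mathcal{C}$ a null hypersurface, the bilinear form $g|_{H}$ is in fact positive definite (the degenerate direction of $g|_{T\mathcal{C}}$ is precisely $N$), so $g|_{H}$ is a genuine Riemannian metric on the fibers of $H$. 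Thus $\|\gamma'(s)\|_{g}$ is constant, i.e. the horizontal speed is conserved, and in particular $\xi(s)$ stays in a fixed compact subset of each fiber; combined with compactness of $\mathcal{C}$, the base point $\gamma(s)$ cannot escape to infinity either, so the solution extends to all of $\mathbb{R}$ by the standard escape-lemma argument.

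I expect the main obstacle to be the bookkeeping in deriving the coordinate form of $D_{\gamma'}\gamma'=0$ and, more substantively, checking carefully that $g|_{H}$ is positive definite so that the conserved-speed argument actually confines $\xi$ to a compact set. The positive-definiteness is where the geometry of the Cauchy horizon enters: because $\mathcal{C}$ is a null hypersurface, $g|_{T\mathcal{C}}$ has exactly a one-dimensional radical, and that radical is spanned by the null generator, which lies in $N$; since $N\oplus H$ is a direct sum decomposition, the radical meets $H$ trivially and $g|_{H}$ is nondegenerate, and a signature count then forces it to be definite. Once this is in hand, conservation of horizontal speed plus compactness of $\mathcal{C}$ closes the argument with no further analytic input. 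A minor additional point worth a sentence is that completeness here is with respect to the affine parameter of $D$, not arc length, but since the two are proportional along a horizontal geodesic of fixed nonzero speed (and trivially for the zero-speed constant solution) this causes no difficulty.
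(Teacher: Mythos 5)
Your proof is correct, but it takes a different route from the paper on the local part. You reduce $D_{\gamma'}\gamma'=0$ directly to a smooth first-order ODE system in a horizontal frame ($\dot{\gamma}=\xi^{a}e_{a}$, $\dot{\xi}^{a}=-\widetilde{\Gamma}^{a}_{bc}\xi^{b}\xi^{c}$) and invoke Picard--Lindel\"of, whereas the paper proves local existence and uniqueness by a quotient argument: on a flow box $U=\varphi(B\times(-\delta,\delta))$ of the null vector field $X$, the degenerate metric $h$ satisfies $\mathcal{L}_{X}h=0$ and hence descends to a Riemannian metric $q$ on the local leaf space $V$, the horizontal derivative $D$ projects to the Levi-Civita connection of $q$, and horizontal geodesics are exactly the horizontal lifts of $q$-geodesics. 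Your approach is more elementary and shorter for this proposition taken in isolation; the paper's construction is heavier here but is then reused as the engine of Proposition \ref{TOLO} (the projectability of horizontal geodesics to a common quotient geodesic is what makes the transported curves null), so it earns its keep later. For the globalization both arguments coincide: $D$-compatibility with $g$ gives $\frac{d}{ds}g(\gamma',\gamma')=2g(D_{\gamma'}\gamma',\gamma')=0$, and constancy of the speed plus compactness of $\mathcal{C}$ (equivalently, of the sphere subbundle of $H$) rules out finite-time escape. Your care about the positive-definiteness of $g|_{H}$ is well placed and your justification is right: on a null hypersurface the radical of $g|_{T\mathcal{C}}$ is exactly the null line $N$, so the direct sum $T\mathcal{C}=N\oplus H$ forces $g|_{H}$ to be a Riemannian fiber metric; the paper uses this silently when it speaks of orthonormal frames of $H$ and the compact frame bundle $E$.
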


Then we define the horizontal exponential map, in the usual manner.
\begin{Definition} The {\it horizontal exponential map}, is the map $\overline{\exp}:H\rightarrow \mathcal{C}$ defined as,
\be
\overline{\exp}\,(p,v)=\gamma(1),
\ee
where $\gamma(s)$ is the unique horizontal geodesic with $\gamma(0)=p$ and $\gamma'(0)=v$. 
\end{Definition}
The map $\overline{\exp}$ will be of course smooth (it comes after solving a smooth ODE). The next proposition states the only crucial (unsubstitutable) property of the horizontal exponential map that we will need during the proof of the main theorem. Before it, define a curve $\alpha:[a,b]\rightarrow \mathcal{C}$ to be {\it null} if $\alpha'(s)\in N(\alpha(s))$ for all $s\in [a,b]$.
\begin{Proposition}[Transport of horizontal geodesics]\label{TOLO} Let $p\in \mathcal{C}$ and $v\in H(p)$, $v\neq 0$. Let $\alpha:[a,b]\rightarrow \mathcal{C}$ be a null curve with nowhere zero velocity and $\alpha(a)=p$, and let $V:[a,b]\rightarrow \mathcal{C}$ be the parallel transport of $v$ along $\alpha$. Then the curve $\beta:[a,b]\rightarrow \mathcal{C}$ given by $\beta(s)=\overline{\exp}\,({\alpha(s)},V(s))$ is a null curve with nowhere zero velocity. 
\end{Proposition}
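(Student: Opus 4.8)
The plan is to exploit the two structural facts about the horizon that have already been put on the table: that $\mathcal{C}$ is totally geodesic (so $\nabla$ restricted to $\mathcal{C}$ makes sense and preserves $T\mathcal{C}$), and that the null direction field on $\mathcal{C}$ is one-dimensional and its geodesic flow is the one ruling $\mathcal{C}$. Fix a smooth null nowhere-zero vector field $X$ in a neighbourhood of $p$ with $X(\alpha(a))$ proportional to $\alpha'(a)$; since $N$ is a line bundle, along $\alpha$ we have $\alpha'(s) = \lambda(s) X(\alpha(s))$ for some smooth nowhere-zero $\lambda$, so it suffices to treat the case $\alpha'(s)=X(\alpha(s))$, i.e. $\alpha$ is an integral curve of $X$, and then reparametrise at the end. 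Consider the two-parameter map $\Phi(s,t) := \overline{\exp}(\alpha(s), tV(s))$, which is smooth by Proposition \ref{EXISTENCE2} and the smooth dependence of ODE solutions on initial data. For each fixed $s$, the curve $t\mapsto \Phi(s,t)$ is the horizontal geodesic with initial velocity $V(s)$, so $\beta(s)=\Phi(s,1)$. I want to show $\partial_s \Phi(s,1) \in N(\beta(s))$ and is nowhere zero.

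The key step is to study the two vector fields $U := \partial_s\Phi$ and $W := \partial_t\Phi$ along $\Phi$, and show that the $H$-component of $U$ vanishes identically. First, $W$ is by construction a horizontal geodesic field, so $D_W W = \pi(\nabla_W W) = 0$; moreover $W$ is nowhere zero since $v\neq 0$ and horizontal geodesics have constant (horizontal) speed, which is $|v|_g \neq 0$. Since $[U,W]=0$ (coordinate vector fields), we get $\nabla_U W = \nabla_W U$. Now decompose $U = U^N \oplus U^H$ along the splitting $T\mathcal{C}=N\oplus H$ and set $u(s,t) := U^H = \pi(U)$, a horizontal field along each $t$-curve. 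Differentiating along $W$ and projecting with $\pi$: I will derive a linear second-order ODE in $t$ for $u$ of the schematic form $D_W D_W u = \mathcal{R}(W, \cdot)$-type curvature terms acting on $u$ and on $U^N$, where the point is that the $N$-component of $U$ can only enter through terms that are themselves controlled. The clean way to see the $H$-component dies is: at $t=0$ we have $\Phi(s,0)=\alpha(s)$, so $U(s,0)=\alpha'(s)=X(\alpha(s))\in N$, hence $u(s,0)=0$; and $D_W u|_{t=0}$ equals $\pi(\nabla_W U)|_{t=0} = \pi(\nabla_U W)|_{t=0} = \pi(\nabla_{\alpha'} V) = D_{\alpha'} V = 0$ because $V$ is horizontally parallel along $\alpha$. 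So $u$ and its $D_W$-derivative both vanish at $t=0$; if $u$ satisfies a homogeneous linear second-order ODE along each $t$-line, uniqueness forces $u\equiv 0$, i.e. $U=\partial_s\Phi$ stays in $N$ for all $t$, in particular at $t=1$. That gives $\beta'(s)=\partial_s\Phi(s,1)\in N(\beta(s))$.

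The main obstacle I expect is precisely verifying that the evolution equation for $u=\pi(\partial_s\Phi)$ along the horizontal geodesics is genuinely a \emph{closed} homogeneous linear system in $u$ — i.e. that the $N$-component $U^N$ does not feed back into it in an uncontrolled way. This is a computation with the horizontal Jacobi-type equation: one commutes $\nabla_U W = \nabla_W U$, applies $\nabla_W$ again, uses $D_W W = 0$, and repeatedly projects with $\pi$, keeping careful track of how $\nabla$ mixes $N$ and $H$ (the second fundamental form of the distribution $H$, and the spacetime curvature restricted to $\mathcal{C}$, both appear). The favourable structural input is that $N$ is integrable with totally geodesic leaves (the null geodesics ruling $\mathcal{C}$), so $\nabla_W$ of an $N$-valued field that starts at $0$ with $D_W$-derivative $0$ cannot generate an $H$-component spontaneously; this is what ultimately decouples the system. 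Once $u\equiv 0$ is established, nowhere-vanishing of $\beta'$ follows because $\Phi$ is a diffeomorphism in $s$ for each fixed $t$ near the relevant range (the transport $s\mapsto(\alpha(s),V(s))$ is an embedding and $\overline{\exp}$ is a local diffeomorphism along the zero section, extended by the flow), so $\partial_s\Phi\neq 0$; alternatively, if $\beta'(s_0)=0$ then $\beta$ would be constant near $s_0$ by uniqueness of the (now null, hence geodesic) curve $\beta$, contradicting that $\beta$ projects onto $\alpha$ via the obvious retraction along horizontal geodesics. Finally, undoing the reparametrisation $\alpha'=\lambda X$ only rescales $\beta'$ by a nowhere-zero factor, so the conclusion persists.
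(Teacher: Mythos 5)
Your route is genuinely different from the paper's (which quotients a neighbourhood of $\mathcal{C}$ by the null flow, shows horizontal geodesics project to geodesics of the quotient metric, and then globalizes by a continuity argument in $s$), and it can be made to work, but as written it has two real gaps. The first is the step you yourself flag and then skip: you never verify that $u=\pi(\partial_s\Phi)$ obeys a \emph{closed} homogeneous linear equation along the $t$-lines, and the structural reason you offer (``$N$ is integrable with totally geodesic leaves'') is not the right mechanism --- that only gives $\nabla_XX\parallel X$ and cannot prevent terms like $\pi(\nabla_W U^N)$ or $\pi(R(W,X)W)$ from feeding the $N$-component back into the equation. What actually closes the system is the paper's fact (I), i.e.\ the full Weingarten property $\nabla_YX=\omega_X(Y)X$ for \emph{every} $Y$ tangent to $\mathcal{C}$ (vanishing of the whole second fundamental form of $\mathcal{C}$, not just geodesy of the generators). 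From it, $\nabla$ of any multiple of $X$ in any tangential direction stays in $N$, and $R(a,b)X=d\omega_X(a,b)X$ for tangential $a,b$, hence $R(a,b,X,c)=0$ because $g(X,\cdot)$ annihilates $T\mathcal{C}$; by the curvature symmetries $g(R(W,X)W,Y)=R(W,Y,W,X)=0$ for horizontal $Y$, so $\pi(R(W,X)W)=0$. With these inputs your computation does yield $D_tD_tu=\pi(R(W,u)W)$ with $u(s,0)=0$ and $D_tu(s,0)=D_{\alpha'}V=0$, hence $u\equiv0$; but this verification is the heart of the proposition and must be carried out, not asserted.

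The second gap is the nowhere-vanishing of $\beta'$. Neither of your arguments is valid as stated: $\overline{\exp}$ being a local diffeomorphism near the zero section says nothing at $t=1$ for arbitrary $v$, and $\beta'(s_0)=0$ does not force $\beta$ to be constant near $s_0$ (a degenerate reparametrization of a single null generator has null velocity that vanishes at a point without the curve being constant), so no contradiction is obtained. A correct repair inside your own framework: since $\pi(\nabla_s\partial_t\Phi)=\pi(\nabla_t\partial_s\Phi)=D_tu=0$, the field $s\mapsto W(s,1)$ is $D$-parallel along $\beta$; therefore, if $\beta'(s_0)=0$, the curve $s\mapsto(\beta(s),-W(s,1))$ in the total space of $H$ has zero velocity at $s_0$, and composing with $\overline{\exp}$ (which sends it to $\alpha$, by running the horizontal geodesics backwards) would give $\alpha'(s_0)=0$, contradicting the hypothesis that $\alpha$ has nowhere zero velocity. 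The paper instead gets both nullity and non-vanishing locally from the quotient picture and propagates them by a bootstrap in $s$; either route is fine, but as submitted your proposal establishes neither assertion of the proposition completely.
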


We prove this proposition also in Section \ref{HGEO}.

\section{Proof of Theorem \ref{MAINTHEO}}\label{S2}

Before going into the proof we mention two important preliminary facts that will be used.
\begin{enumerate}
\item[(I)] First, and as mentioned in the introduction, any null vector field $Z$ on an open set $U$ of $\mathcal{C}$ gives rise to a one-form $\omega_{Z}$ over $U$ defined by, 
\be
\omega_{Z}(Y)Z=\nabla_{Y}Z,
\ee
for any $Y$ vector field on $U$. The form $\omega_{Z}$ has the central property that its exterior derivative is {\it null} in the sense that,
\be
d\omega_{Z}(Z,Y)=0, 
\ee
for any $Y$ vector field on $U$. As commented earlier, this is the crucial property allowing the Isenberg-Moncrief ``ribbon'' argument and will be used fundamentally. 

Note finally that if, $Z=fX$ then,
\be\label{CHANGEOFGAUGE}
\omega_{Z}(Z)=\frac{Z(f)}{f}+\omega_{X}(Z).
\ee

\item[(II)] Second, we mention how to compute the affine length of a null geodesic from a non-affine parametrization of it. Let $\gamma(s)$, $\gamma:[0,L)\rightarrow \mathcal{C}$ be an inextensible null geodesic. The affine length $L$ being finite or infinite. Let $s(\rho):[\rho_{0},\infty)\rightarrow [0,L)$ be a smooth change of parameter so that now $\gamma(s(\rho))$ is a null geodesic possibly with a non-affine parameterization. Letting $\gamma'=d\gamma/d\rho$ and $s'=ds/d\rho$, we compute,
\be\label{OMEGADEF}
\nabla_{\gamma'}\gamma'=(\frac{s''}{s'}) \gamma'=:\omega(\rho)\gamma',
\ee
Hence, having the expression for $\omega(\rho)$, the affine-length $L$ is computed by, 
\be\label{LENGTH}
L=s'(0)\int_{\rho_{0}}^{\infty}e^{\int_{\rho_{0}}^{\rho}\omega(\lambda)d\lambda}d\rho.
\ee
We deduce then, that in order to have affine length $L$ equal to one, it is necessary and sufficient to have,
\be
\frac{1}{s'(0)}=\int_{\rho_{0}}^{\infty}e^{\int_{\rho_{0}}^{\rho}\omega(\lambda)d\lambda}d\rho<\infty.
\ee 
This is equivalent to have $\int_{\rho_{0}}^{\infty}e^{\int_{\rho_{0}}^{\rho}\omega(\lambda)d\lambda}d\rho<\infty$ and to start the geodesic $\gamma(s)$ at $\gamma(0)$ with velocity equal to,
\be
\frac{d\gamma}{ds}\bigg|_{s=0}=(\int_{\rho_{0}}^{\infty}e^{\int_{\rho_{0}}^{\rho}\omega(\lambda)d\lambda}d\rho)\frac{d \gamma}{d\rho}\bigg|_{\rho=\rho_{0}}.
\ee
We will use this expression to given an explicit (local) presentation of the candidate vector field, that will be proved to be smooth.

Finally, note that if $X$ is a nowhere zero null vector field and $\gamma'(\rho)=f(\rho)X(\gamma(\rho))$ for some smooth $f(\rho)>0$, then, 
\be\label{OMEGA}
\omega=\frac{f'}{f}+\omega_{X}(\gamma'),
\ee
(compare this with (\ref{CHANGEOFGAUGE})).
\end{enumerate}
\vs

We introduce now additional notation.

During the proof, $B(0,r)\subset \mathbb{R}^{2}$ will denote the open ball of radius $r>0$ and centered at the origin. In $\mathbb{R}^{2}$ we use coordinates $(x,y)$.

It turns out that in order to parameterize incomplete geodesics in a controlled fashion it will be convenient to fix an auxiliary smooth nowhere zero future null vector field $X$. We fix such $X$ from now on and reserve the letter $X$ for it. Let $X^{*}$ be the one form such that $X^{*}(X)=1$ and $X^{*}(Y)=0$ for any horizontal vector field $Y$. The form $X^{*}$ is clearly smooth. Let $\varphi:\mathcal{C}\times (-\infty,\infty)\rightarrow \mathcal{C}$ be the smooth flow defined by $X$, namely, $\varphi(p,z)$ is the solution to the ODE,
\be
\frac{d\varphi (p,z)}{dz}=X(\varphi(p,z)),\quad \varphi(p,0)=p, 
\ee
hence with $z$ the parameter of the integral curves of $X$.

Finally we let $\mathfrak{p}: E\rightarrow \mathcal{C}$ be the bundle of orthonormal frames of $H$. Points in $E$ are denoted by $(p,\{e_{1},e_{2}\})$ with $\{e_{1},e_{2}\}$ an orthonormal basis (frame) of $H(p)$.  Given $(p,\{e_{1},e_{2}\})$ we denote by $\{e_{1}(z),e_{2}(z)\}$ to the horizontal parallel transport of $\{e_{1},e_{2}\}$ along the null curve $z\rightarrow \varphi(p,z)$. 

\begin{proof}[Proof of Theorem \ref{MAINTHEO}] We begin explaining the main arguments of the proof. The proof is divided in two obvious consecutive steps: (A) proving that all future null geodesics have finite affine length, (B) proving that the candidate Killing vector field $\tilde{K}$ is smooth. For (A) it will be enough to show that: (A') there are uniform $\epsilon>0$ and $\delta>0$ such that, if a future null geodesic from a point $p_{0}$ is incomplete, then all future null geodesics starting at any point in the uniform neighbourhood,
\be\label{UDEF}
U(p_{0},\epsilon,\delta)=\{\overline{\exp}(\varphi(p_{0},z),xe^{0}_{1}(z)+ye^{0}_{2}(z)):\ z^{2}<\delta^{2},\ x^{2}+y^{2}<\epsilon^{2}\},
\ee 
are also incomplete, (above $\{e_{1}^{0},e_{2}^{0}\}$ is any frame in $E(p_{0})$). What is important here is that $\epsilon$ and $\delta$ are independent on $p_{0}$. Indeed,  if we show (A') then the set of points with a null future incomplete geodesic will be open and closed, and thus (A) will follow from the connectivity of $\mathcal{C}$. Now, to prove (A'), but also for the proof of the step (B), it will play a simple but important role the map,
\be
(x,y,z)\rightarrow \overline{\exp}(\varphi(p_{0},z),xe_{1}^{0}(z)+ye_{2}^{0}(z)),
\ee
that we used to define $U(p_{0},\epsilon,\delta)$ in (\ref{UDEF}). Let us define it precisely in the next lines and inspect its properties. We will end up explaining the argument behind the proof of (B). 

Given $p_{0}$ and $\{e_{1}^{0},e_{2}^{0}\}\in E(p_{0})$, and given $\epsilon>0$ and $\delta>0$, define 
\be
\psi:B(0,\epsilon)\times (-\delta,\infty)\subset{\mathbb{R}^{3}}\rightarrow \mathcal{C},
\ee
as,
\be
\psi(x,y,z)=\overline{\exp}\, (\varphi(p_{0},z),xe^{0}_{1}(z)+ye^{0}_{2}(z)).
\ee 
Now, it is not difficult to prove and we will do later, that if $\epsilon>0$ and $\delta>0$ are small enough, then  for any $z_{1}\geq 0$ the restriction of $\psi$ to $B(0,\epsilon)\times (z_{1}-\delta,z_{1}+\delta)$ is an embedding. So let us assume such $\epsilon$ and $\delta$ for the rest of the discussion. By Proposition \ref{TOLO}, the curves,
\be
z\rightarrow \psi(x,y,z),
\ee
are null with non-zero velocity, and hence are null geodesics with $z$ a non-necessarily affine parameter (this is the only place where Proposition \ref{TOLO} is used). Therefore, their affine length can be calculated as was explained in (II). To make that explicit define $f(x,y,z)>0$ by, 
\be
d_{(x,y,z)}\psi (\partial_{z})=:f(x,y,z)X(\psi(x,y,z)), 
\ee
and then define the one-form $\omega^{*}$ over $B(0,\epsilon)\times (-\delta,\infty)\subset \mathbb{R}^{3}$ by,
\be
\omega^{*} := \frac{df}{f}+\psi^{*}\omega_{X},
\ee
which is null-closed in the sense that $d\omega^{*}(\partial_{z}, -)=0$ indeed by virtue of (I):
\begin{align}
d\omega^{*}(\partial_{z},Y)= & d(d\ln f) + d\psi^{*}\omega_{X}(\partial_{z},Y)\\
= & d\omega_{X}(\psi_{*}(\partial_{z}),\psi_{*}(Y))=0,
\end{align}
where in the last step we used that $\psi_{*}(\partial_{z})$ is null. (Just in passing, $\omega^{*}$ is the pull-back of the form $\omega_{Z}$ defined by the (just) local field $Z:=d\psi(\partial_{z})$, see (I)). Thus, using (\ref{OMEGA}) we obtain 
\be
\omega=\omega^{*}(\partial_{z}),
\ee
(see definition of $\omega$ in (\ref{OMEGADEF})) and then using (\ref{LENGTH}) we find that the affine length $L(x,y,z)$ of the future null geodesic starting from $\psi(x,y,z)$ with velocity $Z(\psi(z,y,z))$ takes the expression,
\be\label{L1}
L(x,y,z)=\int_{z}^{\infty}e^{\int_{z}^{\rho}\omega(x,y,\lambda)d\lambda}d\rho.
\ee
With this expression at hand, the goal of (A') is to prove that, if $L(0,0,0)<\infty$ then $L(x,y,z)<\infty$ for all $(x,y,z)$ with $x^{2}+y^{2}<\epsilon$ and $z^{2}<\delta^{2}$, whereas the goal of (B) is to prove that, once (A') is done, the following presentation of the candidate vector field,
\be\label{CANDIDATEEXP}
\tilde{K}(\psi(x,y,z))=L(x,y,z)Z(\psi(x,y,z)),
\ee
is smooth as a function of the smooth local coordinates $(x,y,z)$. Note that as we are restricting $(x,y,z)$ to $B(0,\epsilon)\times (-\delta,\delta)$ where $\psi$ is an embedding, the vector field $Z(\psi(x,y,z))$ is well defined and smooth over the patch $\psi(B(0,\epsilon)\times (-\delta,\delta))$. Thus, to achieve (B) we need to show that: (B') $L(x,y,z)$ is smooth. To prove (A') and (B') we need to link somehow $L(x,y,z)$ to $L(0,0,0)$. As was explained in the introduction, linking $L(x,y,z)$ to $L(0,0,0)$ is what the ribbon argument does. We explain how it works in the following lines. 

As $d\omega^{*}$ is null (i.e. $d\omega^{*}(\partial_{z},-)=0$), Stokes theorem shows that the integral of $\omega^{*}$ over the border of the rectangle $\mathcal{R}$ in $\mathbb{R}^{3}$ with vertices $(0,0,z)$, $(0,0,\rho)$, $(x,y,\rho)$ and $(x,y,z)$, is zero (note that $\partial_{z}$ is tangent to $\mathcal{R}$). We write this identity as,
\be\label{IDENTIT}
\int_{z}^{\rho}\omega(x,y,\lambda)d\lambda = S(x,y,\rho)-S(x,y,z)+\int_{z}^{\rho}\omega(0,0,\lambda)d\lambda,
\ee
where $S(x,y,z)$ is the integral of $\omega^{*}$ along the segment from $(0,0,z)$ to $(x,y,z)$ and $S(x,y,\rho)$ is the one from $(0,0,\rho)$ to $(x,y,\rho)$. Then (\ref{IDENTIT}) transforms (\ref{L1}) into,
\be\label{AFFINEL}
L(x,y,z)=\int_{z}^{\infty}e^{S(x,y,\rho)-S(x,y,z)}e^{\int_{z}^{\rho}\omega(0,0,\lambda)d\lambda}d\rho.
\ee
The important point here is that positive integrand $e^{\int_{z}^{\rho}\omega(0,0,\lambda)d\lambda}$ in this integral is integrable by virtue of $L(0,0,0)<\infty$. That is the desired link between $L(0,0,0)$ and $L(x,y,z)$. 

Now, we claim that (A') and (B') follow after proving that the function $S(x,y,z)$ and all the partial derivatives of it of any given order are bounded all over $B(0,\epsilon)\times (-\delta,\infty)$, (the bounds may depend on the order). Note also that the function $\omega(0,0,z)=\omega^{*}(\partial_{z})(0,0,z)=X^{*}(\nabla_{X}X)(\varphi(p_{0},z))$ that appears also in (\ref{AFFINEL}) is uniformly bounded and so are all of its derivatives. 

In fact, assuming that property for $S(x,y,z)$ that we will show later, then (A') follows from the bound,
\be
L(x,y,z)\leq e^{2\|S\|_{L^{\infty}}}L(0,0,0),
\ee
whereas (B') follows by a simple induction in the order of the derivatives after checking that one can use Leibniz's rule for differentiation under an improper integral sign. Let us set the induction precisely. Define $\mathcal{B}$ to be the space of functions, 
\be
F(x,y,z):B(0,\epsilon)\times (-\delta,\infty)\rightarrow \mathbb{R},
\ee
and, 
\be
G(x,y,z,\rho): B(0,\epsilon)\times \{(z,\rho)\in \mathbb{R}^{2}: \rho\geq z\geq -\delta\}\rightarrow \mathbb{R},
\ee
that are bounded and have all the derivatives of a given order also bounded. Then, the induction to prove (B') is set as follows: 

If for all multi-index $I=(i_{1},i_{2},i_{3})$, with $|I|=i_{1}+i_{2}+i_{3}= k$, we have,
\be\label{1}
\frac{\partial^{|I|} L(x,y,z)}{\partial_{x}^{i_{1}}\partial_{y}^{i_{2}}\partial_{z}^{i_{3}}} = F_{I}(x,y,z)+\int_{z}^{\infty}G_{I}(x,y,z,\rho)e^{\int_{z}^{\rho}\omega(0,0,\lambda)d\lambda}d\rho,
\ee
for some $F_{I}$ and $G_{I}$ in $\mathcal{B}$, then for all multi-index $I'=(i'_{1},i'_{2},i'_{3})$ with $|I'|=i'_{1}+i'_{2}+i'_{3}= k+1$, we have,
\be\label{2}
\frac{\partial^{|I'|} L(x,y,z)}{\partial_{x}^{i'_{1}}\partial_{y}^{i'_{2}}\partial_{z}^{i'_{3}}} = F_{I'}(x,y,z)+\int_{z}^{\infty}G_{I'}(x,y,z,\rho)e^{\int_{z}^{\rho}\omega(0,0,\lambda)d\lambda}d\rho,
\ee
for some $F_{I'}$ and $G_{I'}$ in $\mathcal{B}$. 

Note that for $k=0$, equation (\ref{L1}) has the form (\ref{1}) with $F_{0}=0$ and $G_{0}=e^{S(x,y,\rho)-S(x,y,z)}$. To prove the inductive step we need to calculate the derivatives carefully. First, the derivative of (\ref{1}) with respect to $x$ is,
\be\label{3}
\frac{\partial}{\partial x}\frac{\partial^{|I|} L(x,y,z)}{\partial_{x}^{i_{1}}\partial_{y}^{i_{2}}\partial_{z}^{i_{3}}} =\frac{\partial F_{I}(x,y,z)}{\partial x}+\int_{z}^{\infty}\frac{\partial G_{I}(x,y,z,\rho)}{\partial x}e^{\int_{z}^{\rho}\omega(0,0,\lambda)d\lambda}d\rho,
\ee
where the differentiation inside the integral is permitted by virtue of the fact that $\partial_{x} (G_{I}e^{\int_{z}^{\rho}\omega(0,0,z)d\lambda})$ is continuous but also bounded by the integrable function of $\rho$, $Ce^{\int_{z}^{\rho}\omega(0,0,z)d\lambda}$. This is a pretty straightforward fact\footnote{The precise statement is: if $f(x,t)$ and $\partial_{x}f(x,t)$ are continuous, $|f(x,t)|\leq g_{1}(t)$ and $|\partial_{x}f(x,t)|\leq g_{2}(t)$ with $\int_{t_{0}}^{\infty}g_{1}(\tau)d\tau<\infty$ and $\int_{t_{0}}^{\infty}g_{2}(\tau)d\tau<\infty$, then the function $x\rightarrow \int_{t_{0}}f(x,\tau)d\tau$ is differentiable and its derivative is equal to $\int_{t_{0}}^{\infty}\partial_{x}f(x,\tau)d\tau$.}, that can be found for instance in Theorem 15 in Chp 8 of \cite{MR1123269}. A similar calculation holds for the derivative with respect to $y$, whereas the derivative with respect to $z$ is directly,
\begin{align}\label{4}
\frac{\partial}{\partial z} & \frac{\partial^{|I|} L(x,y,z)}{\partial_{x}^{i_{1}}\partial_{y}^{i_{2}}\partial_{z}^{i_{3}}} =  \frac{\partial F_{I}(x,y,z)}{\partial z} - G_{I}(x,y,z,z) \\
& \hspace{1.7cm}+ \int_{z}^{\infty}\big(\frac{\partial G_{I}(x,y,z,\rho)}{\partial z} - G_{I}(x,y,z,\rho)\omega(0,0,z)\big)e^{\int_{z}^{\rho}\omega(0,0,\lambda)d\lambda}d\rho.
\end{align}
The proof of the inductive step thus follows. This would finish the proof of (A') and (B') and hence so of (A) and (B).
\vs

We pass now to prove the claims that were left to be proved, namely (i) to show the existence of $\epsilon>0$ and $\delta>0$ such that $\psi:B(0,\epsilon)\times (-\delta+z_{1},\delta+z_{1})\rightarrow \mathcal{C}$ is an embedding for any $z_{1}\geq 0$, and (ii) show that $S(x,y,z):B(0,\epsilon)\times (-\delta,\infty)\rightarrow \mathbb{R}$ as well as any of its derivatives are bounded (again, the bounds may depend on the order of the derivative). We prove (i) first and then (ii). Both are basically the result of compactness. Before the proof we make some analysis.

We define first a smooth map $\phi$ from $E\times B(0,1) \times (-1,1)$ into $\mathcal{C}$, (recall $E$ is the frame bundle of $H$). Points in $E\times B(0,1) \times (-1,1)$ are denoted by $((p,\{e_{1},e_{2}\}),(x,y),z)$. The map $\phi$ is given by,
\be\label{MAINDEF}
\phi((p,\{e_{1},e_{2}\}),(x,y),z):=\overline{\exp}\, (\varphi(p,z),xe_{1}(z)+ye_{2}(z)).
\ee 
At any point $P=((p,\{e_{1},e_{2}\}),(0,0),0)$ we compute,
\be\label{DIFF}
d_{P}\phi(\partial_{x})=e_{1},\quad d_{P}\phi(\partial_{y})=e_{2},\quad d_{P}\phi(\partial_{z})=X.
\ee
Hence, at any $P\in E$ there is $0<\epsilon<1$ and $0<\delta<1$ such that the map $\phi$ restricted to $\{P\}\times B(0,3\epsilon)\times (-3\delta,3\delta)$ is an embedding. By continuity there is a neighbourhood $U_{P}$ such that at any $P'\in U_{P}$ the map $\phi$ restricted to $\{P'\}\times B(0,2\epsilon)\times (-2\delta,2\delta)$ is an embedding. As $E$ is compact then there are uniform $0<\epsilon<1$ and $0<\delta<1$, such that at any $P\in E$ the map $\phi$ restricted to $\{P\}\times B(0,2\epsilon)\times (-2\delta,2\delta)$ is an embedding. Also, taking into account the third equation in (\ref{DIFF}), that we rewrite as $X^{*}(d_{P}\phi(\partial_{z}))=1$ for all $P\in E\times \{(0,0)\}\times \{0\}$, we can decrease $\epsilon$ and $\delta$ if necessary so that, in addition, 
\be
X^{*}(d\phi(\partial_{z}))\geq \frac{1}{2},
\ee
all over $E\times B(0,2\epsilon)\times (-2\delta,2\delta)$. We fix such $\epsilon$ and $\delta$ from now on. 

Consider now the following four smooth functions from $E\times B(0,2\epsilon)\times (-2\delta,2\delta)$ into $\mathbb{R}$,
\be\label{FOURFUNCTIONS}
\ln (X^{*}(d\phi(\partial_{z}))),\quad \omega_{X}(d\phi(\partial_{x})),\quad \omega_{X}(d\phi(\partial_{y})),\ \ {\rm and}\ \ \omega_{X}(d\phi(\partial_{z})).
\ee
Trivially, the four of them are bounded functions when restricted to the compact set $C:=E\times \overline{B(0,\epsilon)}\times [-\delta,\delta]\subset E\times B(0,2\epsilon)\times (-2\delta,2\delta)$. The same of course holds true for any partial derivative of any order in $x,y,$ and $z$. We state this as follows,  
\be\label{MAINBOUNDS}
\| \frac{\partial^{|I|} h}{\partial x^{i_{1}}\partial y^{i_{2}}\partial z^{i_{3}}}\|_{L^{\infty}(C)}\leq c(|I|),
\ee
where $I$ a multi-index $I=(i_{1},i_{2},i_{3})$, $|I|=i_{1}+i_{2}+i_{3}$, and $h$ is any of the four functions (\ref{FOURFUNCTIONS}). We will see now that these trivial bounds are ultimately all the necessary bounds.

We are finally are in position to prove (i) and (ii). The basic observation is that for any $z_{1}\geq 0$ the map $\psi$ restricted to $B(0,\epsilon)\times (z_{1}-\delta,z_{1}+\delta)$ is ``equal'' to the map $\phi$ restricted to $\{(\varphi_{z_{1}}(p_{0}),\{e_{1}^{0}(z_{1}),e_{2}^{0}(z_{1})\})\}\times B(0,\epsilon)\times (-\delta,\delta)$, more precisely if we define,
\be
\chi:B(0,\epsilon)\times (-\delta+z_{1},\delta+z_{1})\rightarrow \{(\varphi_{z_{1}}(p_{0}),\{e_{1}^{0}(z_{1}),e_{2}^{0}(z_{1})\})\}\times B(0,\epsilon)\times (-\delta,\delta),
\ee
by,
\be
\chi(x,y,z)=\phi((\varphi_{z_{1}}(p_{0}),\{e_{1}^{0}(z_{1}),e_{2}^{0}(z_{1})\}),(x,y),z-z_{1}),
\ee
then, $\psi(x,y,z)=\phi(\chi(x,y,z))$. This shows (i), namely, that for any $z_{1}\geq 0$, the map $\psi$ restricted to $B(0,\epsilon)\times (z_{1}-\delta,z_{1}+\delta)$ is an embedding. To show (ii) we proceed as follows. First, as $d\chi(\partial_{x})=\partial_{x}$, $d\chi(\partial_{y})=\partial_{y}$ and $d\chi(\partial_{z})=\partial_{z}$, we deduce that,
\begin{gather}\label{EQUALFOURFUNCTIONS}
\ln (X^{*}(d\psi(\partial_{z})))\bigg|_{(x,y,z)}=\ln (X^{*}(d\phi(\partial_{z})))\bigg|_{\chi(x,y,z)},\\
\omega_{X}(d\psi(\partial_{x}))\bigg|_{(x,y,z)}=\omega_{X}(d\phi(\partial_{x}))\bigg|_{\chi(x,y,z)},\\ 
\omega_{X}(d\psi(\partial_{y}))\bigg|_{(x,y,z)}=\omega_{X}(d\phi(\partial_{y}))\bigg|_{\chi(x,y,z)},\\
\omega_{X}(d\psi(\partial_{z}))\bigg|_{(x,y,z)}=\omega_{X}(d\phi(\partial_{z}))\bigg|_{\chi(x,y,z)}.
\end{gather}
It follows then from this and from (\ref{MAINBOUNDS}) that, 
\be
\| \frac{\partial^{|I|} \bar{h}}{\partial x^{i_{1}}\partial y^{i_{2}}\partial z^{i_{3}}}\|_{L^{\infty}(B(0,\epsilon)\times (-\delta+z_{1},\delta+z_{1}))}\leq c(|I|),
\ee
where $I$ is the multi-index $I=(i_{1},i_{2},i_{3})$, $|I|=i_{1}+i_{2}+i_{3}$, $c(|I|)$ are the same constants as in (\ref{MAINBOUNDS}) and $\bar{h}$ is now any of the four functions,
\begin{gather}\label{DOUBLEFOURFUNCTIONS}
f:=\ln (X^{*}(d\psi(\partial_{z}))),\\ 
\varpi_{x}:=\omega_{X}(d\psi(\partial_{x})),\quad \varpi_{y}:=\omega_{X}(d\psi(\partial_{y})),\quad \varpi_{y}:=\omega_{X}(d\psi(\partial_{z})).
\end{gather}
Finally, as these estimates are valid for any $z_{1}\geq 0$, we obtain that the form,
\be
\omega^{*} = \frac{df}{f}+\psi^{*}\omega_{X} = \frac{\partial_{x}f}{f}dx+\frac{\partial_{y}f}{f}dy+\frac{\partial_{z}f}{f}dz+\varpi_{x}dx+\varpi_{y}dy+\varpi_{z}dx,
\ee
is bounded and has all its derivatives of any order bounded over $B(0,\epsilon)\times (-\delta,\infty)\subset \mathbb{R}^{3}$. This directly proves (ii) namely that $S(x,y,z)$ and all its derivatives of any order are bounded, as wished.
\end{proof}

\section{Proof of Propositions \ref{EXISTENCE2} and \ref{TOLO}}\label{HGEO} 

In this section we prove Propositions \ref{EXISTENCE2} and \ref{TOLO}.

Recall that a curve $\gamma:(a,b)\rightarrow \mathcal{C}$ is a horizontal geodesic if for all $s\in (a,b)$, 
\be\label{HG}
\gamma'(s)\in H(\gamma(s))\quad {\rm and}\quad \pi(\nabla_{\gamma'}\gamma')(s)=0.
\ee

\begin{Proposition}\label{EXISTENCE} 
Let $p\in \mathcal{C}$ and $v\in H(p)$. Then, for $\epsilon>0$ sufficiently small, there exists a unique horizontal geodesic $\gamma:(-\epsilon,\epsilon)\rightarrow \mathcal{C}$ with $\gamma(0)=p$ and $\gamma'(0)=v$. 
\end{Proposition}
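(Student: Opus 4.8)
The plan is to recast the horizontal geodesic equation \eqref{HG} as a standard second–order ODE on a suitable finite–dimensional manifold and then invoke Picard--Lindel\"of. First I would work in a local chart. Fix $p\in\mathcal{C}$ and choose a coordinate neighbourhood $(U;u^1,u^2,u^3)$ of $p$ together with a local smooth frame of the horizontal bundle, i.e. two smooth horizontal vector fields $E_1,E_2$ spanning $H(q)$ for every $q\in U$, and a smooth nowhere-zero null vector field $E_0$ spanning $N(q)$; this is possible since $T\mathcal{C}=N\oplus H$ is a smooth splitting. A horizontal curve is then $\gamma(s)=(u^1(s),u^2(s),u^3(s))$ with $\gamma'(s)=\sum_{a=1}^{2}\xi^a(s)E_a(\gamma(s))$ for smooth functions $\xi^1,\xi^2$, and conversely the coordinates $u^i(s)$ are recovered from $\xi^a(s)$ and $\gamma(0)$ by integrating $\dot u^i=\sum_a \xi^a E_a^i(u)$.

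Next I would compute $\pi(\nabla_{\gamma'}\gamma')$ in this frame. Writing $\nabla_{E_a}E_b=\sum_{c=0}^{2}\Gamma_{ab}^{\,c}E_c$ for smooth structure functions $\Gamma_{ab}^{\,c}$ on $U$ (these exist because $\nabla$ is the ambient connection restricted to the totally geodesic $\mathcal{C}$, hence tangent to $\mathcal{C}$), we get
\be
\nabla_{\gamma'}\gamma' \;=\; \sum_{a}\dot\xi^{a}E_a \;+\; \sum_{a,b}\xi^a\xi^b\,\nabla_{E_a}E_b,
\ee
and applying $\pi$ (which kills the $E_0$–component and is the identity on $E_1,E_2$) yields
\be
\pi(\nabla_{\gamma'}\gamma') \;=\; \sum_{c=1}^{2}\Big(\dot\xi^{c}+\sum_{a,b=1}^{2}\Gamma_{ab}^{\,c}(u)\,\xi^a\xi^b\Big)E_c.
\ee
Thus \eqref{HG} is equivalent to the first–order system $\dot u^i=\sum_a\xi^aE_a^i(u)$, $\dot\xi^{c}=-\sum_{a,b}\Gamma_{ab}^{\,c}(u)\xi^a\xi^b$ for $(u,\xi)\in U\times\mathbb{R}^2$, with smooth right-hand side and initial condition $(u(0),\xi(0))$ determined by $p$ and the components of $v$ in the frame $\{E_1,E_2\}$. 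Existence and uniqueness of a local solution on some $(-\epsilon,\epsilon)$ is then immediate from the Picard--Lindel\"of theorem; smooth dependence on data comes for free from the same theorem, which is what is needed later for $\overline{\exp}$ to be smooth.

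There is essentially no serious obstacle here: the only point requiring a word of care is checking that the reduction is genuinely equivalent, i.e. that a curve solving the ODE system is automatically horizontal for all $s$ in its interval of existence (true by construction, since $\gamma'=\sum_a\xi^aE_a$ lies in $H$ by definition of the ansatz) and that the $\xi^a$ really are the frame components of $\gamma'$, so that $D_{\gamma'}\gamma'=\pi(\nabla_{\gamma'}\gamma')$ vanishes exactly when the second set of equations holds. One should also note that the choice of local frame $\{E_1,E_2\}$ affects the $\Gamma_{ab}^{\,c}$ but not the equation \eqref{HG} itself, so the solution is independent of that choice on the overlap of charts, which is what makes the local solutions patch into the maximal ones used in Proposition~\ref{EXISTENCE2}. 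The passage from this local statement to the global $(-\infty,\infty)$ interval asserted in Proposition~\ref{EXISTENCE2} is a separate matter (it needs the null/horizontal structure and compactness of $\mathcal{C}$, or an a priori bound on $\xi$); here I would only prove the local version stated in Proposition~\ref{EXISTENCE}.
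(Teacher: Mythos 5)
Your argument is correct: expressing a horizontal curve as $\gamma'=\sum_a\xi^aE_a$ in a local frame of $H$, using that $\mathcal{C}$ is totally geodesic to expand $\nabla_{E_a}E_b$ in the frame $\{E_0,E_1,E_2\}$, and observing that $\pi$ kills the $E_0$-component turns \eqref{HG} into a smooth first-order system for $(u,\xi)$, so Picard--Lindel\"of gives local existence and uniqueness (and the standard smooth-dependence theorem, rather than Picard--Lindel\"of itself, gives the smoothness of $\overline{\exp}$ you allude to). This is, however, a genuinely different route from the paper. There, the proof fixes the null vector field $X$, takes a local quotient $V$ of a flow box $U=\varphi(B\times(-\delta,\delta))$ by the null orbits, uses $\mathcal{L}_Xh=0$ to push the degenerate metric $h$ down to a Riemannian metric $q$ on $V$, and verifies via the Koszul formula that the projected derivative $\mathcal{D}_YZ=d\xi(D_{Y^*}Z^*)$ is the Levi-Civita connection of $q$; horizontal geodesics are then exactly the horizontal lifts of $q$-geodesics, and existence/uniqueness is quoted from Riemannian geometry. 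Your approach is more elementary and self-contained for the local statement, and it makes the ODE structure (hence smooth dependence on initial data) completely explicit. What the paper's quotient construction buys, and what your argument does not provide, is the structural fact that horizontal geodesics project to geodesics of the local quotient: this is reused essentially in the proof of Proposition \ref{TOLO} (the transported horizontal geodesics $\beta(z,s)$ all project to one and the same geodesic of $V$, which is what makes the curves $z\rightarrow\beta(z,s)$ null with parallel velocity). So as a proof of Proposition \ref{EXISTENCE} alone your proposal is complete; just be aware that it does not replace the quotient picture needed downstream, and that, as you correctly note, the extension to $(-\infty,\infty)$ in Proposition \ref{EXISTENCE2} additionally uses $\langle\gamma',\gamma'\rangle$ being constant (compatibility of $D$ with $g$) together with compactness of $\mathcal{C}$.
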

\begin{proof}[Proof of Proposition \ref{EXISTENCE}] We make a local calculation. Let $B$ be an embedded disc containing $p$ and transversal to the null directions. For $\delta>0$ sufficiently small, the restriction of $\varphi$ to $B\times (-\delta,\delta)$ is an embedding ($\varphi$ is again the flow generated by $X$, the vector field that we fixed in Section \ref{S2}). Let $U=\varphi(B\times (-\delta,\delta))$. The open set $U$ is foliated by the null orbits $\{\{\varphi(p,z):z\in (-\delta,\delta)\} : p\in B\}$. Let $V$ be the quotient of $U$, and note that obviously $V$ is diffeomorphic to $B$. Let $\xi:U\rightarrow V$ be the projection. Any function $f$ on $V$ lifts to a function $f^{*}$ on $U$ by: $f^{*}(p):=f(\xi(p))$. Also, any vector field $Y$ on $V$ lifts to a horizontal vector field $Y^{*}$ on $U$ by: $Y^{*}(p)\in H(p)$ and $d_{p}\xi (Y^{*}(p))=Y(\xi(p))$. Note that for any function $f$ and vector field $Y$ on $V$ we have, $Y^{*}(f^{*})=(Y(f))^{*}$. Also, note that $\pi([Y^{*},Z^{*}])=[Y,Z]^{*}$, (again $\pi$ is the horizontal projection, see Section \ref{S1}). Indeed, for any function $f$ on $V$ we have,
\begin{align}
\pi([Y^{*},Z^{*}])(f^{*}) & = [Y^{*},Z^{*}](f^{*}) = \\
& = Y^{*}(Z^{*}(f^{*}))-Z^{*}(Y^{*}(f^{*}))=(Y(Z(f))-Z(Y(f)))^{*}= \\
& = ([Y,Z](f))^{*}.
\end{align}

Let $h$ be the degenerate metric on the horizon $\mathcal{C}$. Such tensor is the restriction to $\mathcal{C}$ of the spacetime metric $g$. As $\mathcal{L}_{X} h = 0$, \footnote{This is because $\mathcal{L}_{X}h(Y,W)=g(\nabla_{Y}X,W)+g(\nabla_{W}X,Y)=g(\omega_{X}(Y)X,W)+g(\omega_{X}(W)X,Y)=0$.}, the metric $h$ on $U$ can be quotient to a metric $q$ on $V$. Note that $\langle Y^{*},Z^{*}\rangle = \langle Y,Z\rangle^{*}$, where with some abuse of notation, (that will be used below too), the first bracket corresponds to the degenerate metric $h$ ($\langle Y^{*},Z^{*} \rangle = h(Y^{*},Z^{*})$) and the second to the metric $q$ ($\langle Y,Z \rangle = q(Y,Z)$).

We show now that the covariant derivative on $V$ defined by,
\be\label{PDER}
{\mathcal{D}}_{Y}Z:=d\xi(\pi(\nabla_{Y^{*}}Z^{*}))=d\xi(D_{Y^{*}}Z^{*}),
\ee
is indeed the Levi-Civita connection of $q$ (in this formula $D$ is the horizontal covariant derivate on $H$, see Section \ref{S1}). Note that though $\pi(\nabla_{Y^{*}}Z^{*})$ is well defined as a vector field on $U$, it is not necessarily projectable to a vector field on $V$, so in principle (\ref{PDER}) may not be well defined. That it is indeed well defined will be clear in the following calculation. By the standard formula, for any vector field $W$ on $V$ we have,
\begin{align}
\label{CV1} \langle \pi(\nabla_{Y^{*}}Z^{*}),W^{*}\rangle = & \langle \nabla_{Y^{*}}Z^{*},W^{*}\rangle = \\ 
\label{CV2} = & \frac{1}{2}\big\{Z^{*} \langle Y^{*},W^{*}\rangle +Y^{*}\langle W^{*},Z^{*}\rangle -W^{*}\langle Y^{*},Z^{*}\rangle \\ 
\label{CV3} &- \langle [Z^{*},W^{*}],Y^{*}\rangle -\langle [Y^{*},W^{*}],Z^{*}\rangle -\langle [Z^{*},Y^{*}],W^{*}\rangle \big\}.
\end{align} 
Now,
\be
Z^{*} \langle Y^{*},W^{*}\rangle = Z^{*}\langle Y,W\rangle^{*}=(Z\langle Y,W\rangle)^{*},
\ee
and similarly for the other two terms in (\ref{CV2}). Also, 
\be
\langle [Z^{*},W^{*}],Y^{*}\rangle = \langle \pi([Z^{*},W^{*}]),Y^{*}\rangle = \langle [Z,W]^{*},Y^{*}\rangle = \langle [Z,W],Y\rangle^{*},
\ee
and similarly for the other terms in (\ref{CV3}). Putting all together we obtain,
\begin{align}
\label{CV12} \langle \pi(\nabla_{Y^{*}}Z^{*}),W^{*}\rangle = & \frac{1}{2}\{Z \langle Y,W\rangle +Y\langle W,Z\rangle -W\langle Y,Z\rangle \\ 
\label{CV32} &- \langle [Z,W],Y\rangle -\langle [Y,W],Z\rangle -\langle [Z,Y],W\rangle\}^{*} = \\
\label{CV4} = & \langle \nabla_{Y}Z,W\rangle^{*},
\end{align}
where on the left hand side of (\ref{CV12}) the covariant derivative is that of $g$ and on (\ref{CV4}) the covariant derivative is that of $q$. Thus $\mathcal{D}$ is the Levi-Civita connection of $q$.

Let now $\gamma(s)$ be a horizontal curve, namely $\gamma'(s)\in H(\gamma(s))$ for all $s$. Let $\alpha(s)=\xi(\gamma(s))$. Then, the calculation earlier shows that,
\be
d\xi (\pi (\nabla_{\gamma'}\gamma')) = \mathcal{D}_{\alpha'}\alpha'.
\ee
Hence, if $\gamma(s)$ is a horizontal geodesic on $U$, then $\alpha(s)$ is a geodesic on $V$. So if $\gamma:(-\epsilon,\epsilon)\rightarrow \mathcal{C}$ is a horizontal geodesic on $U$ then $\alpha(s)=\xi(\gamma(s))$ is a geodesic on $V$ with $\alpha(0)=\xi(p)$ and $\alpha'(0)=d\xi(v)$. Conversely, if $\alpha:(-\epsilon,\epsilon)\rightarrow V$ is a geodesic on $V$, with $\alpha(0)=\pi(p)$ and $\alpha'(0)=d\xi(v)$ then one can lift it to a unique horizontal curve $\gamma(s)$, with $\gamma(0)=p$, $\gamma'(0)=v$, that will be the horizontal geodesic we are looking for.
\end{proof}

Now note that $|\gamma'|^{2}{'}=\langle \gamma',\gamma'\rangle ' = 2\langle \nabla_{\gamma'}\gamma',\gamma'\rangle = 2\langle \pi(\nabla_{\gamma'}\gamma'),\gamma'\rangle =0$, and thus the norm of $\gamma'$ is constant. A standard argument using the compactness of $\mathcal{C}$ then shows that any horizontal geodesic $\gamma:(a,b)\rightarrow \mathcal{C}$ can be uniquely continued to a horizontal geodesic $\gamma:(-\infty,\infty)\rightarrow \mathcal{C}$, thus proving Proposition \ref{EXISTENCE2}.

Let us move now to prove Proposition \ref{TOLO}. Let us remain for some lines inside the context of the proof just made of Proposition \ref{EXISTENCE}. Let $p_{1}$ and $p_{2}$ be two points in $U$ projecting into the same point, $\xi(p_{1})=\xi(p_{2})$. Let $v_{1}\in H(p_{1})$ and $v_{2}\in H(p_{2})$ projecting into the same vector, $d\xi(v_{1})=d\xi(v_{2})$. Assume that the norms of $v_{1}$ and $v_{2}$ (which are necessarily equal) is small enough so that the horizontal geodesics $\gamma_{1}:[0,1]\rightarrow \mathcal{C}$ and $\gamma_{2}:[0,1]\rightarrow \mathcal{C}$ starting from $p_{1}$ and $p_{2}$ with velocities $v_{1}$ and $v_{2}$ respectively lie inside $U$. Then, $\alpha_{1}=\xi(\gamma_{1})$ and $\alpha_{2}=\xi(\gamma_{2})$ are geodesics of $V$ that have the same initial data and are thus equal. This shows that $\xi(\gamma_{1}(1))=\xi(\gamma_{2}(1))$, or, equivalently,
\be\label{JUSTIFICATION}
\xi(\overline{\exp}(p_{2},v_{2}))=\xi(\overline{\exp}(p_{1},v_{1})). 
\ee
Now, we claim that $v_{2}$ is the horizontal parallel transport of $v_{1}$ from $p_{1}=\varphi(p_{1},0)$ to $p_{2}=\varphi(p_{1},z_{2})$.
%with $|v|$ small enough, then the curve,
%\be\label{FINALCURVE}
%\beta(z)=\overline{\exp}(\varphi(p,z),V(z)),
%\ee
%where 
%
Indeed if we let $V(z)\in H(\varphi(p_{1},z))$ be the unique horizontal field over the null curve $z\rightarrow \varphi(p_{1},z)$ such that $d\xi(V(z))=d\xi(v_{1})$, and thus with $V(0)=v_{1}$ and $V(z_{2})=v_{2}$, 
%, is a null curve with $\beta'(z)\neq 0$. 
%The field $V(z)$ is indeed the horizontal parallel transport of $v$. 
then the claim follows by the computation.
\be
0 = \pi(\mathcal{L}_{X}V) = \pi(\nabla_{X}V-\nabla_{V}X)=D_{X}V-\pi(\omega_{X}(V)X)=D_{X}V.
\ee
%which shows that $V(z)$ is indeed the horizontal parallel transport of $v_{1}$. 

In sum what we have shown is that, given $p\in \mathcal{C}$ and $v\in H(p)$, there are $\epsilon(p)>0$ and $\delta(p)>0$ such that, if we let $V(z)$ be the horizontal parallel transport of $v$ along the null curve $z\rightarrow \varphi(p,z)$, then the family of horizontal geodesics $\beta(z,s):[0,\epsilon]\times [-\delta,\delta]\rightarrow \mathcal{C}$, given as,
\be\label{FINALCURVE}
\beta(z,s)=\overline{\exp}(\varphi(p,z),sV(z)),
\ee
all project into the same geodesic in $V$, that is,
\be
\xi(\beta(z,s))=\xi(\beta(0,s)). 
\ee
Therefore the curves, $z\rightarrow \beta(z,s)$ are all null, and the horizontal fields over them, $z\rightarrow \partial_{s} \beta(z,s)$, are all horizontally parallel. Of course, if $\epsilon$ and $\delta$ are small enough then $\partial_{z}\beta(z,s)\neq 0$, for all $z\in [0,\epsilon]$ and $s\in [-\delta,\delta]$. Of course too one can chose $\epsilon(p)$ and $\delta(p)$ such that if $p'$ is sufficiently close to $p$, then the same holds with $\epsilon(p')=\epsilon(p)$ and $\delta(p')=\delta(p)$.

To prove Proposition \ref{TOLO} we will use what we know so far and make a simple continuity argument. Let $p\in \mathcal{C}$ and $v\in H(p)$, $v\neq 0$ but arbitrary. Let again $V(z)$ be the horizontal parallel transport of $v$ along the null curve $z\rightarrow \varphi(p,z)$, for all $z\in \mathbb{R}$. Let $\beta(z,s)=\overline{\exp}(\varphi(p,z),sV(z))$, for all $s\geq 0$. We will show that $\partial_{z}\beta(z,s)$ is null and different from zero for all $z\in \mathbb{R}$ and all $s\geq 0$. This is clearly enough to prove the proposition. Observe that $z=0$ doesn't play any particular role, so it is enough to prove that that $\partial_{z}\beta(0,s)$ is null and different from zero for all $s\geq 0$.

Let $s_{*}$ be the supremum of the $s_{1}>0$ for which there is $\epsilon=\epsilon(s_{1})>0$ such that for all $s\leq s_{1}$, we have: (i) the curves $z\rightarrow \beta(z,s)$ are null, and $\partial_{z}\beta(z,s)\neq 0$ for all $z\in [0,\epsilon]$, (ii) the fields $z\rightarrow \partial_{s}\beta(z,s)$ along the curves $z\rightarrow \beta(z,s)$, are horizontally parallel. By what was proved earlier we have $s_{*}>0$. If $s_{*}=\infty$ we are done. So assume $0<s_{*}<\infty$. 

Let $s_{1}=s_{*}-\delta$, for some $\delta>0$ that we will chose soon. Let $p_{1}=\beta(0,s_{1})$, $v_{1}=\partial_{s}\beta(0,s_{1})$ and let $\beta_{1}(z,s)=\overline{\exp}(\varphi(p_{1},z),(s-s_{1})V_{1}(z))$ where $V_{1}(z)$ is the horizontal parallel transport of $v_{1}$ along $z\rightarrow \varphi(p_{1},z)$. Then, it is direct that from (i) and (ii) and the fact that $s_{1}<s_{*}$, that there is a function $z_{1}(z):[0,\epsilon(s_{1})] \rightarrow \mathbb{R}$, with $z_{1}'(z)\neq 0$ for which we have $\beta(z,s)=\beta_{1}(z_{1}(z),(s-s_{1})V_{1}(z_{1}(z)))$, when $s_{1}\leq s<s_{*}+\delta$. 

But as shown earlier too, if $\delta>0$ is sufficiently small, there is $\epsilon(\delta)>0$, such that: (i') the curves $z\rightarrow \beta_{1}(z,s)$ are null with $\partial_{z}\beta_{1}(z,s)\neq 0$ for all $z\in [0,\epsilon(\delta)]$, (ii') the field $z\rightarrow \partial_{s}\beta_{1}(z,s)$ along the curves $z\rightarrow \beta_{1}(z,s)$, are horizontally parallel. 

It follows from the paragraphs above that for $z\in [0,\epsilon(s_{1})]$ and for $s$ in the interval $s_{1}=s_{*}-\delta<s<s_{*}+\delta$ we have $\partial_{z}\beta_{1}(z,s)\neq 0$, hence $\partial_{z}\beta(z,s)=\partial_{z}\beta_{1}(z_{1}(z),s-s_{1})z_{1}'(z)$ is null and different from zero, and, furthermore, $\partial_{s}\beta(z,s)=\partial_{s}\beta_{1}(z_{1}(z),s-s_{1})$ is horizontally parallel. We reach thus a contradiction, having assumed $s_{*}<\infty$. Thus $s_{*}=\infty$ and the Proposition \ref{TOLO} is proved. 

\bibliographystyle{plain}
\bibliography{Master}

\end{document}